\newcommand{\R}{{\mathbb{R}}}
\newcommand{\C}{{\mathbb{C}}}
\newcommand{\E}{\mathbb{E}}
\newcommand{\Prob}{\mathbb{P}}
\DeclareMathOperator{\tr}{tr}
\DeclareMathOperator{\sym}{Sym}
\newtheorem{theorem}{Theorem}[subsection]
\newtheorem{lemma}[theorem]{Lemma}
\newtheorem{definition}[theorem]{Definition}
\newtheorem{remark}[theorem]{Remark}
\begin{document}
	
	\title{Stable low-rank matrix recovery from 3-designs}
	\author{Timm Gilles}
	\email{gilles@mathga.rwth-aachen.de}
	\address{Lehrstuhl f\"ur Geometrie und Analysis, RWTH Aachen University, D-52056 Aachen, Germany}
	
	\maketitle
	\begin{abstract}
		We study the recovery of low-rank Hermitian matrices from rank-one measurements obtained by uniform sampling from complex projective 3-designs, using nuclear-norm minimization. This framework includes phase retrieval as a special case via the PhaseLift method. In general, complex projective $t$-designs provide a practical means of partially derandomizing Gaussian measurement models. While near-optimal recovery guarantees are known for $4$-designs, and it is known that $2$-designs do not permit recovery with a subquadratic number of measurements, the case of $3$-designs has remained open. In this work, we close this gap by establishing recovery guarantees for (exact and approximate) $3$-designs that parallel the best-known results for $4$-designs. In particular, we derive bounds on the number of measurements sufficient for stable and robust low-rank recovery via nuclear-norm minimization. Our results are especially relevant in practice, as explicit constructions of $4$-designs are significantly more challenging than those of $3$-designs.	
	\end{abstract}

	\section{Introduction}
	\setcounter{subsection}{1} 
	We consider the following well-studied instance of the low-rank matrix recovery problem: Let $X\in\C^{n\times n}$ be an unknown Hermitian low-rank matrix. One observes the measurements 
	\begin{align*}
	\tr(XA_j),\quad j=1,\ldots,m,
	\end{align*}
	where $A_j=a_ja_j^{\ast}$ for some vectors $a_j\in\C^n$. Then, the goal is to recover the matrix $X$ from as few measurements as possible. We can also express this process in terms of a so-called measurement operator, defined as
	\begin{align}\label{eq:def_measurement_operator}
		\mathcal{A}\colon \mathcal{H}_n\to \R^m,Z\mapsto \sum_{j=1}^m \tr(Z A_j )e_j,
	\end{align}
	where $\mathcal{H}_n$ denotes the set of $n\times n$ Hermitian matrices. Then, the (possibly noisy) measurement process is given by 
	\begin{align}\label{eq:def_meas_process}
		b=\mathcal{A}(X)+\epsilon,
	\end{align}
	where $\epsilon\in\R^m$ denotes additive noise satisfying $\Vert \epsilon \Vert_{\ell_q}\leq \eta$. It is natural to ask whether there exists a recovery scheme that efficiently recovers $X$ and whether fewer than $n^2$ measurements are sufficient. One well-studied approach is the nuclear norm minimization problem, a convex minimization problem which is given by
	\begin{align}\label{eq:def_nuclear_norm minimization}
		\min_{Z\in \mathcal{H}_n}\Vert Z\Vert_{\ast}\quad \text{subject to}\quad \Vert\mathcal{A}(Z)-b\Vert_{\ell_q}\leq \eta,
	\end{align}
	where $\Vert\cdot\Vert_{\ast}$ denotes the nuclear norm. There are multiple results considering different random measurement processes of the above type that show that the minimization problem (\ref{eq:def_nuclear_norm minimization}) efficiently recovers $X$ using fewer than $n^2$ measurements with high probability. This holds e.g. if the vectors $a_1,\ldots,a_m$ are independent complex Gaussian random vectors \cite{candes2014solving,tropp2015convex,kueng2017low,kabanava2016stable} or are independently and uniformly drawn from a (approximate) $t$-design \cite{gross2015partial,kueng2017low,kabanava2016stable}.
	
	\begin{definition}
		Let $\{w_1,\ldots,w_N\}\subseteq \C^n$ be a set of normalized vectors with corresponding weights $\{p_1,\ldots,p_N\}\subseteq [0,1]$ such that $\sum_{i=1}^N p_i=1$. The set $\{p_i,w_i\}_{1\leq i\leq N}$ is called an approximate $t$-design of $p$-norm accuracy $\theta_p$, if
		\begin{align*}
			\left\Vert \sum_{i=1}^N p_i(w_iw_i^{\ast})^{\otimes t}-\binom{n+t-1}{t}^{-1} P_{\sym^t} \right\Vert_p\leq \binom{n+t-1}{t}^{-1} \theta_p,
		\end{align*}
		where $P_{\sym^t}\colon (\C^n)^{\otimes t}\to (\C^n)^{\otimes t}$ projects onto the totally symmetric subspace $\sym^t$ of $(\C^n)^{\otimes t}$, is defined by
		\[
		P_{\sym^t}(z_1\otimes \ldots \otimes z_t)=\frac{1}{t!}\sum_{\pi\in S_t} z_{\pi(1)}\otimes \ldots \otimes z_{\pi(t)}
		\]
		and is called symmetrizer map. If $\theta_p=0$, then $\{p_i,w_i\}_{1\leq i\leq N}$ is called a weighted complex projective $t$-design.
	\end{definition}
	
	Let $\{p_i,w_i\}_{1\leq i\leq N}$ be a $t$-design. Then, it holds that
	\begin{align*}
	\mathbb{E}_{a\sim \{p_i,w_i\}_{1\leq i\leq N}}[(aa^{\ast})^{\otimes t}]=\binom{n+t-1}{t}^{-1} P_{\sym^t} =\int_{\Vert w\Vert_{\ell_2}=1} (ww^{\ast})^{\otimes t} \,\text{d}w=\mathbb{E}_{a\sim \mathcal{U}(S^{n-1})}[(aa^{\ast})^{\otimes t}],
	\end{align*}
	which follows from \cite[Lemma 1]{kueng2017low} and where the integral is with respect to normalized Haar measure on the sphere. Thus, the first $2t$-moments (understood in the above sense) of a random vector that uniformly samples from a $t$-design are equal to the moments when sampling uniformly from the sphere, which again is the same as sampling according to a standard Gaussian vector and normalizing each sample \cite{kueng2017low}. This explains why $t$-designs are often used to partially derandomize a process that is e.g., based on a complex Gaussian random vector; see \cite{gross2015partial}.\\
	The main objective of this paper is to establish a stable and robust recovery result via nuclear norm minimization for the measurement process described in (\ref{eq:def_meas_process}), where the $a_j$ are independently and uniformly drawn from a (approximate) $3$-design. By proving such a result, we close an open gap in the literature that was first noted in \cite{kueng2017low}. Addressing this gap is particularly relevant, since explicit constructions of $4$-designs are significantly more challenging and arguably offer less structure than those of $3$-designs; see \cite{ambainis2007quantum,zhu2017multiqubit,kueng2016low}.\\
	We finish this introduction by remarking that plenty of recovery results have also been proven for the measurement process (\ref{eq:def_meas_process}) in which $A_j$ is not assumed to be of rank one; see \cite{candes2012exact,candes2011tight,gross2011recovering,liu2011universal,tropp2015convex,kabanava2016stable}.
	
	\section{Main results}
	\setcounter{subsection}{1} 
	Our main theorems establish recovery guarantees under the assumption that a random vector $a_j$ is drawn from a (approximate) $3$-design $\{p_i,w_i\}_{1\leq i\leq N}$. By that we mean that the vector $w_i$ is picked with probability $p_i$. Before stating our main results, we introduce some notation. For any $Z\in\C^{n\times n}$, let $Z_r=\sum_{j=1}^r\sigma_j(Z)u_jv_j^{\ast}$ and $Z_c=\sum_{j=r+1}^n\sigma_j(Z)u_jv_j^{\ast}$, where $Z=\sum_{j=1}^n\sigma_j(Z)u_jv_j^{\ast}$ is the singular value decomposition of $Z$, and the singular values $\sigma_j(Z)$ are arranged in decreasing order.

	\begin{theorem}\label{th:general_recovery_from_exact_3_designs}
		Let $q\geq 1$. Let $\{p_i,w_i\}_{1\leq i\leq N}$ be a weighted 3-design and let $a_1,\ldots, a_m\in \C^m$ be independently drawn from $\{p_i,w_i\}_{1\leq i\leq N}$. Consider the measurement process described in (\ref{eq:def_measurement_operator}) with $A_j=\sqrt{n(n+1)}\,a_j a_j^{\ast}$ for $1\leq j\leq m$. Fix $1\leq r\leq n, 0< \rho <1$ and assume that
	 	\begin{align*}
	 		m\geq C_1\rho^{-6} r^3 n \log(2n).
	 	\end{align*}
 		Then, with probability at least $1-\exp\left(-\frac{C_2 m}{\rho^{-4}r^2}\right)$ it holds that for any $X\in\mathcal{H}_n$, the solution $X^{\#}$ to the convex optimization problem (\ref{eq:def_nuclear_norm minimization}) with $b=\mathcal{A}(X)+\epsilon$, where $\Vert \epsilon \Vert_{\ell_q}\leq \eta$, satisfies
 	\begin{align*}
 		\Vert X-X^{\#}\Vert_F\leq \frac{2(1+\rho)^2}{(1-\rho)\sqrt{r}}\Vert X_c\Vert_{\ast}+\frac{2C_3(3+\rho)r}{(1-\rho)\rho^2}\cdot\frac{\eta}{m^{1/q}}.
 	\end{align*}
 	Here, $C_1,C_2,C_3>0$ are absolute constants. 
	\end{theorem}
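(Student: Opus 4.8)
The plan is to reduce everything to a robust rank null space property for the (normalized) measurement map, and then to invoke the now-standard passage from such a property to an error bound for \eqref{eq:def_nuclear_norm minimization}; cf.\ the null-space-property framework for low-rank recovery in \cite{kabanava2016stable}. Precisely, I would show that on the asserted event $\tfrac{1}{m^{1/q}}\mathcal A$ satisfies, for all $Z\in\mathcal{H}_n$,
\[
\Vert Z_r\Vert_F\le \frac{\rho}{\sqrt r}\,\Vert Z_c\Vert_{\ast}+\tau\,\frac{\Vert\mathcal A(Z)\Vert_{\ell_q}}{m^{1/q}},
\]
for suitable $\rho\in(0,1)$ and $\tau\asymp r/\rho^{2}$ (the latter yielding the noise term of the statement). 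Since $\Vert v\Vert_{\ell_1}\le m^{1-1/q}\Vert v\Vert_{\ell_q}$ for $v\in\R^{m}$ and $q\ge1$, one has $\tfrac{1}{m^{1/q}}\Vert v\Vert_{\ell_q}\ge\tfrac1m\Vert v\Vert_{\ell_1}$, so it suffices to treat $q=1$: the whole matter boils down to a uniform lower bound on $\tfrac1m\Vert\mathcal A(Z)\Vert_{\ell_1}=\tfrac{\sqrt{n(n+1)}}{m}\sum_{j=1}^m|a_j^{\ast}Za_j|$, and general $q\ge1$ then comes for free.

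For that lower bound I would apply Mendelson's small-ball method (see, e.g., \cite{tropp2015convex}) to $E=\{Z\in\mathcal{H}_n:\Vert Z\Vert_F=1,\ \Vert Z\Vert_{\ast}\le s\}$, where $s\asymp\sqrt r/\rho$ bounds the nuclear-to-Frobenius ratio on the cone relevant for the null space property. It produces, with the stated kind of probability,
\[
\inf_{Z\in E}\frac1m\sum_{j=1}^m|\langle Z,A_j\rangle|\ \ge\ \xi\,Q_{2\xi}(E)-\frac{2}{\sqrt m}\,\mathcal W_m(E)-\frac{\xi t}{\sqrt m},
\]
with $Q_{2\xi}(E)=\inf_{Z\in E}\Prob(|\langle Z,A\rangle|\ge2\xi)$ the marginal small-ball function and $\mathcal W_m(E)=\E\sup_{Z\in E}\langle\tfrac1{\sqrt m}\sum_j\varepsilon_j A_j,Z\rangle$ ($\varepsilon_j$ Rademacher) the mean empirical width. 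The mean-width estimate is routine: $\mathcal W_m(E)\le s\sqrt{n(n+1)}\,\E\big\Vert\tfrac1{\sqrt m}\sum_j\varepsilon_j a_ja_j^{\ast}\big\Vert_{\mathrm{op}}\lesssim s\sqrt{n\log(2n)}$ by the noncommutative Khintchine / matrix Bernstein inequality, using $(a_ja_j^{\ast})^{2}=a_ja_j^{\ast}$ and $\E\,a_ja_j^{\ast}=\tfrac1n I$. The small-ball bound is the heart of the proof, and the only place the full force of the $3$-design is used.

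For the small-ball estimate, fix $Z\in E$ and set $\zeta=\langle Z,A\rangle=\sqrt{n(n+1)}\,a^{\ast}Za$. Since $Z$ is Hermitian, split it orthogonally as $Z=Z_+-Z_-$ with $Z_\pm\succeq0$, $Z_+Z_-=0$; then $a^{\ast}Z_\pm a\ge0$, so $|a^{\ast}Za|^{3}\le(a^{\ast}Z_+a)^{3}+(a^{\ast}Z_-a)^{3}$, and the absolute value stops being an obstruction because $\E(a^{\ast}Z_\pm a)^{3}$ is evaluated exactly by the $3$-design: $\E(a^{\ast}Z_\pm a)^{3}=\binom{n+2}{3}^{-1}\tr\!\big(Z_\pm^{\otimes3}P_{\sym^3}\big)=\tfrac{(\tr Z_\pm)^{3}+3\tr(Z_\pm)\tr(Z_\pm^{2})+2\tr(Z_\pm^{3})}{n(n+1)(n+2)}$. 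Bounding $\tr Z_\pm\le\Vert Z\Vert_{\ast}\le s$, $\tr(Z_\pm^{2})\le\Vert Z\Vert_F^{2}=1$ and $\tr(Z_\pm^{3})\le\Vert Z\Vert_{\mathrm{op}}\le1$ yields $\E|\zeta|^{3}\lesssim s^{3}$. Meanwhile the $2$-design gives $\E\zeta^{2}=(\tr Z)^{2}+\Vert Z\Vert_F^{2}\ge1$, and trivially $|\zeta|\le\sqrt{n(n+1)}$. Feeding these into the Paley--Zygmund-type inequality $\E\zeta^{2}\le(2\xi)^{2}+(\E|\zeta|^{3})^{2/3}\Prob(|\zeta|\ge2\xi)^{1/3}$ with $2\xi$ a small constant multiple of $(\E\zeta^{2})^{1/2}$ gives $Q_{2\xi}(E)\gtrsim(\E\zeta^{2})^{3}/(\E|\zeta|^{3})^{2}\gtrsim s^{-6}$, uniformly over $E$, at a constant scale $\xi$. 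Substituting the two bounds into Mendelson's inequality, taking $t\asymp\sqrt m\,Q_{2\xi}$, and running the standard homogenization from $E$ to all of $\mathcal{H}_n$ and on to the null space property, the right-hand side is positive and of the correct order precisely when $m\ge C_1\rho^{-6}r^{3}n\log(2n)$, with failure probability $\le\exp(-c\,m\,Q_{2\xi}^{2})\le\exp\!\big(-C_2 m/(\rho^{-4}r^{2})\big)$, which is the claim. For an approximate $3$-design of sufficiently small $\infty$-norm accuracy $\theta$, every moment identity used above for $t\le3$ holds up to an additive error $O\!\big(\theta\binom{n+t-1}{t}^{-1}\big)$, absorbed into the constants.

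The main obstacle is this small-ball estimate, and it is also exactly where the loss relative to $4$-designs lives. A $4$-design would let one evaluate $\E\zeta^{4}$ directly (no absolute value, no $Z_\pm$ split) and obtain the bounded ratio $\E\zeta^{4}\lesssim(\E\zeta^{2})^{2}$, whence $\E|\zeta|\gtrsim(\E\zeta^{2})^{1/2}\ge\Vert Z\Vert_F$ by Cauchy--Schwarz interpolation --- dimension- and rank-free, which underlies the near-optimal $4$-design guarantees. With only a $3$-design one must interpolate $L^{2}$ between $L^{1}$ and $L^{3}$, and the bound $\E|a^{\ast}Za|^{3}\lesssim s^{3}/n^{3}$ from the $Z_\pm$ split is genuinely lossy --- by a factor of order $r^{3/2}$ --- for the ``balanced, flat'' rank-$2r$ matrices $Z$, for which $\tr(Z^{3})=0$, so that the $3$-design's own third moment $\E(a^{\ast}Za)^{3}$ already vanishes and the design cannot tell such a $Z$ from a far more spread-out measurement. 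This $r^{3/2}$ loss, equivalently the small-ball probability $s^{-6}\asymp\rho^{6}r^{-3}$, is precisely what promotes the $r^{2}$ of the $4$-design theory to $r^{3}$ and creates the $\rho^{-6}$ in the sample complexity.
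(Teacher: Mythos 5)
Your skeleton is the paper's: reduce to the Frobenius-robust rank null space property, run Mendelson's small-ball method over the set of unit-Frobenius matrices with $\Vert Z\Vert_{\ast}\le s\asymp\sqrt r/\rho$, bound the mean width by $s\,\E\Vert H\Vert_\infty\lesssim s\sqrt{n\log(2n)}$, and control the small-ball function by interpolating $L^2$ between $L^1$ and $L^3$ via a Paley--Zygmund variant. The gap is in your third-moment bound, and it is fatal to the stated rate. Splitting $Z=Z_+-Z_-$ and using $|a^{\ast}Za|^3\le(a^{\ast}Z_+a)^3+(a^{\ast}Z_-a)^3$ only yields $\E|\zeta|^3\lesssim s^3$: the term $(\tr Z_\pm)^3$ is genuinely of order $s^3$ for a traceless $Z$ with $\Vert Z\Vert_F=1$, $\Vert Z\Vert_{\ast}\asymp s$, while $\E\zeta^2=1$ for such $Z$. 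Hence your small-ball bound is only $Q_{2\xi}\gtrsim s^{-6}$. Feeding that into Mendelson's inequality, the requirement $\xi\sqrt m\,Q_{2\xi}\gtrsim W_m\asymp s\sqrt{n\log(2n)}$ forces $m\gtrsim (s/Q_{2\xi})^2\,n\log(2n)\asymp s^{14}n\log(2n)=\rho^{-14}r^{7}n\log(2n)$, with failure probability $\exp(-c\,mQ_{2\xi}^2)=\exp(-c\,m\rho^{12}/r^{6})$ --- not the claimed $\rho^{-6}r^{3}$ and $\exp(-C_2m\rho^{4}/r^{2})$. Your closing remark that $Q_{2\xi}\asymp s^{-6}$ ``creates the $\rho^{-6}$'' conflates the small-ball probability with the sample complexity; with your bound the exponents do not close.

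What the theorem actually needs is $Q_{2\xi}\gtrsim s^{-2}$, i.e.\ a third moment of the form $\E|\zeta|^3\lesssim s\cdot\E\zeta^2$ with only \emph{one} factor of $s$. The paper gets this by writing $|\tr(aa^{\ast}Z)|^3=\tr(aa^{\ast}Z)^2\,|\tr(aa^{\ast}Z)|\le\tr(aa^{\ast}Z)^2\,\tr(aa^{\ast}|Z|)$ and evaluating the expectation exactly with the $3$-design,
\begin{align*}
\E\big[\tr(aa^{\ast}Z)^2\tr(aa^{\ast}|Z|)\big]=\tfrac{6\sqrt{n(n+1)}}{n+2}\,\tr\!\big(P_{\sym^3}(Z^{\otimes2}\otimes|Z|)\big).
\end{align*}
Expanding the symmetrizer (Lemma \ref{lem:trace_of_sym_map}), each of the six terms carries at most one factor of $\tr|Z|\le\sqrt{1+(1+\rho^{-1})^2}\sqrt r$, the remaining factors being $\tr(Z^2)=1$, $\tr(Z|Z|)\le1$, $\tr(Z^2|Z|)\le1$, or $(\tr Z)^2$, which is absorbed by the $(\E\zeta^2)^3\ge\max\{1,(\tr Z)^2\}^3$ in the Paley--Zygmund numerator. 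This gives $\E|\zeta|^3\le 6s\max\{1,(\tr Z)^2\}$, hence $Q_\theta\ge(1-\theta^2)^3/(36s^2)\asymp\rho^2/r$, and only then do $m\gtrsim\rho^{-6}r^3n\log(2n)$ and the stated probability follow. So your route is salvageable only by replacing the $Z_\pm$ split with this asymmetric bound (two factors of $Z$, one of $|Z|$); as written, the proof establishes a strictly weaker theorem.
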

	
	We now elaborate on the above result and place it within the current research landscape. $t$-designs were first used as a tool to partially derandomize a Gaussian measurement process in \cite{gross2015partial}, where the authors studied the phase retrieval problem via the PhaseLift approach. This problem can be viewed as a special case of the measurement process in (\ref{eq:def_meas_process}) in which the unknown matrix $X$ is rank-one and positive semidefinite. In that setting, they proved a non-uniform recovery result showing that
		\begin{align}\label{eq:number_of_meas_gross_krahmer_paper}
			m\gtrsim t n^{1+\frac{2}{t}}\log(n)^2
		\end{align}
		randomly chosen measurements from a $t$-design with $t\geq 3$ enable recovery via a convex minimization problem similar to (\ref{eq:def_nuclear_norm minimization}); see \cite[Theorem 1]{gross2015partial}. This result was later improved and generalized to a stable and robust uniform low-rank matrix recovery guarantee requiring only $\mathcal{O}( r n \log(n))$ independently drawn measurements from a $t$-design with $t\geq 4$.\\
	Our Theorem \ref{th:general_recovery_from_exact_3_designs} provides an analogous result for $3$-designs. For phase retrieval via PhaseLift, it improves the required number of measurements by removing the suboptimal exponent $1+\frac{2}{3}$ of $n$ as well as one logarithmic factor in (\ref{eq:number_of_meas_gross_krahmer_paper}). For the general low-rank matrix recovery problem, we believe that the exponent $3$ of $r$ is an artifact of the proof and conjecture that it can be reduced to $1$ using alternative proof techniques.\\
	Note that our Theorem \ref{th:general_recovery_from_exact_3_designs} is closely related to the main result \cite[Theorem 1]{kueng2016low}. In that work, the authors established a uniform recovery guarantee requiring the same number of measurements, where the measurement vectors are drawn independently from Clifford orbits. Since these orbits form complex projective $3$-designs, their setting is a special case of ours; see \cite{zhu2017multiqubit,webb2015clifford}. 
	\\
	Recovery from approximate $3$-designs is also possible; see and compare with \cite{kabanava2016stable}.
	
	\begin{theorem}\label{th:general_recovery_from_approx_3_designs}
		Let $\{p_i,w_i\}_{1\leq i\leq N}$ be an approximate 3-design of accuracy either $\theta_{1}\leq\frac{1}{4}$ or $\theta_{\infty}\leq \frac{1}{4r(1+(1+\rho^{-1})^2)}$. Assume that
		\begin{align*}
			\left\Vert \sum_{i=1}^N p_i w_i w_i^{\ast}-\frac{1}{n}\text{id} \right\Vert_{\infty}\leq \frac{1}{n},
		\end{align*}
		
		and let $a_1,\ldots, a_m\in \C^m$ be independently drawn from $\{p_i,w_i\}_{1\leq i\leq N}$. Consider the measurement process described in (\ref{eq:def_measurement_operator}) with $A_j=\sqrt{n(n+1)}\,a_j a_j^{\ast}$ for $1\leq j\leq m$. Fix $1\leq r\leq n, 0< \rho <1$ and assume that
		\begin{align*}
			m\geq c_1 \rho^{-6} r^3 n \log(2n).
		\end{align*}
		Then, with probability at least $1-\exp\left(-\frac{c_2 m}{\rho^{-4}r^2}\right)$ it holds that for any $X\in\mathcal{H}_n$, the solution $X^{\#}$ to the convex optimization problem (\ref{eq:def_nuclear_norm minimization}) with $b=\mathcal{A}(X)+\epsilon$, where $\Vert \epsilon \Vert_{\ell_q}\leq \eta$, satisfies
		\begin{align*}
			\Vert X-X^{\#}\Vert_F\leq \frac{2(1+\rho)^2}{(1-\rho)\sqrt{r}}\Vert X_c\Vert_{\ast}+\frac{2c_3(3+\rho)r}{(1-\rho)\rho^2}\cdot\frac{\eta}{m^{1/q}}.
		\end{align*}
		Here, $c_1,c_2,c_3>0$ are absolute constants. 
	\end{theorem}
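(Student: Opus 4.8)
The plan is to re-run the proof of Theorem~\ref{th:general_recovery_from_exact_3_designs} line by line, tracking the error introduced every time the exact design property was used, and to check that the stated accuracy thresholds are precisely what keeps all constants in a harmless range. Recall the structure of that argument: recovery via (\ref{eq:def_nuclear_norm minimization}) is reduced to an $\ell_q$-robust rank null space property for $\mathcal{A}$, which in turn follows once one establishes (a) a uniform lower bound $m^{-1/q}\Vert\mathcal{A}(Z)\Vert_{\ell_q}\gtrsim\Vert Z\Vert_F$ over the relevant descent set, via Mendelson's small-ball method, and (b) a matching upper bound for approximately rank-$r$ matrices, via a covering/matrix-concentration argument. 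The only places where $\{p_i,w_i\}$ being an \emph{exact} $3$-design entered are the moment identities
\[
\E[aa^{\ast}]=\tfrac1n\,\text{id},\qquad \E[(aa^{\ast})^{\otimes 2}]=\binom{n+1}{2}^{-1}P_{\sym^2},\qquad \E[(aa^{\ast})^{\otimes 3}]=\binom{n+2}{3}^{-1}P_{\sym^3},
\]
used, respectively, to center/normalise the measurements, to evaluate the second moment $\E\langle A_j,Z\rangle^2$ that governs both the small-ball function and the mean empirical width, and to bound $\E|\langle A_j,Z\rangle|^3$ --- the latter via the domination $|\langle A_j,Z\rangle|\le\sqrt{n(n+1)}\,a_j^{\ast}|Z|a_j$ by the matrix modulus $|Z|$, which is exactly the device that lets the Paley--Zygmund lower bound for the small-ball function run on third moments rather than fourth.

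For an approximate $3$-design I would substitute: the explicit hypothesis $\Vert\sum_i p_iw_iw_i^{\ast}-\tfrac1n\text{id}\Vert_{\infty}\le\tfrac1n$ for the first identity; and, for the second and third, the defining bound $\Vert\sum_i p_i(w_iw_i^{\ast})^{\otimes 3}-\binom{n+2}{3}^{-1}P_{\sym^3}\Vert_p\le\binom{n+2}{3}^{-1}\theta_p$, with the second moment recovered from it by a partial trace (a contraction in Schatten-$1$; in operator norm it costs a factor $n$, absorbed against the $\binom{n+1}{2}^{-1}$ normalisation). Propagating these through step (a): the second moment $\E\langle A_j,Z\rangle^2$ incurs an additive error of relative size $O(\theta_1)$, resp.\ $O(\theta_\infty)$; and the third-moment bound $\E|\langle A_j,Z\rangle|^3$ incurs an error obtained by pairing $|Z|^{\otimes 3}$ against the design-error operator, of size $O(\theta_1)$ in the $\theta_1$ regime (since $\Vert\,|Z|^{\otimes 3}\Vert_{\infty}=\Vert Z\Vert_{\infty}^3\le\Vert Z\Vert_F^3=1$ after normalising), but in the $\theta_\infty$ regime controlled through $\Vert\,|Z|^{\otimes 3}\Vert_{\ast}=\Vert Z\Vert_{\ast}^3$, which over the descent set is at most $\big((1+\rho^{-1})\sqrt r\,\Vert Z\Vert_F\big)^3$ --- this is the source of the factor $r$, and (after combining the second- and third-moment errors) of the cone constant $1+(1+\rho^{-1})^2$, in the admissible threshold. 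One then checks that for $\theta_1\le\tfrac14$ and for $\theta_\infty\le\big(4r(1+(1+\rho^{-1})^2)\big)^{-1}$ the perturbed small-ball constant stays bounded below, and the perturbed second moment bounded above, by absolute multiples of their exact-design values; hence Mendelson's method and the union bound go through unchanged and yield the same conclusion with new absolute constants $c_1,c_2,c_3$. Step (b) is affected only through the same two estimates and needs no further work.

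The routine ingredients --- the passage from the rank null space property to the $F$-norm error bound, symmetrisation, the covering argument, and the tail/union bound producing the $\exp(-c_2m\rho^4/r^2)$ failure probability --- are completely insensitive to the design being only approximate and may be quoted verbatim from the proof of Theorem~\ref{th:general_recovery_from_exact_3_designs} (compare also the approximate-design analysis in \cite{kabanava2016stable}). The one genuinely delicate step, which I expect to be the main obstacle, is the bookkeeping just described: one must verify that the third-moment estimate is the \emph{only} place a rank-induced loss occurs, so that the total perturbation is linear rather than higher order in both $r$ and $\theta_p$, and that the constants line up with the stated thresholds --- in particular that no $r$-dependence sneaks into the $\theta_1$ threshold. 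Once this is pinned down, the remainder is substitution into an argument already carried out.
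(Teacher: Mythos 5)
Your proposal matches the paper's proof in essentially every respect: the argument is exactly to perturb the second- and third-moment bounds from the exact-design small-ball lemma by the design-error operator, pair it against $Z^{\otimes 2}$ resp.\ $Z^{\otimes 2}\otimes|Z|$ via Hölder in the dual Schatten norms (so that the $\theta_\infty$ regime picks up $\Vert Z\Vert_\ast^k\lesssim\big((1+(1+\rho^{-1})^2)r\big)^{k/2}$ over $T_{\rho,r}$ while the $\theta_1$ regime costs only $\Vert Z\Vert_F$-factors), check the stated thresholds keep both moments within absolute constants of their exact values, and quote the generalized matrix-concentration bound for $\E\Vert H\Vert_\infty$ together with the null-space-property machinery verbatim. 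The only cosmetic discrepancy is your description of the width term as a ``covering'' step --- in the paper it is just the $\E\Vert H\Vert_\infty$ bound from the generalized version of \cite[Proposition 13]{kueng2017low} --- which does not affect the argument.
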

	
	\begin{remark}\label{rem:equivalent_formulation_for_super_normalized_design}
		One can state Theorem \ref{th:general_recovery_from_exact_3_designs} and Theorem \ref{th:general_recovery_from_approx_3_designs} equivalently for so called super-normalized weighted 3-design, i.e. $\{p_i,\widetilde{w}_i\}_{1\leq i\leq N}$ with
		\begin{align*}
			\widetilde{w}_i=\sqrt[4]{(n+1)n} \,w_i.
		\end{align*}
		Then, for $a_1,\ldots,a_m$ independently drawn from $\{p_i,\widetilde{w}_i\}_{1\leq i\leq N}$ one would define the measurement matrices $A_j$ as $A_j=a_j a_j^{\ast}$. This gives the same measurement process as in Theorem \ref{th:general_recovery_from_exact_3_designs} resp. Theorem \ref{th:general_recovery_from_approx_3_designs}; see also \cite{kueng2017low} and \cite{kabanava2016stable}.
	\end{remark}
	
	\section{Proofs}
	
	The proof structure for Theorem \ref{th:general_recovery_from_exact_3_designs} and Theorem \ref{th:general_recovery_from_approx_3_designs} follows the ideas presented in \cite{kabanava2016stable}, which themselves are based on Mendelson's small ball method \cite{koltchinskii2015bounding, mendelson2015learning}. Their proof uses the Payley-Zygmund inequality at one point, which creates the need to bound a fourth moment - something that is only possible when sampling from an (approximate) $4$-design. To avoid this, we use a variant of the Payley-Zygmung inequality that requires bounding only a third moment, which can be done when sampling from a $3$-design.\\
	While the core ideas align with existing arguments, our main contributions lie in the refined estimates presented in Lemma \ref{lem:Q_bound} and Lemma \ref{lem:gen_bound_for_Q}. For clarity and self-containment, we provide a full proof of Theorem \ref{th:general_recovery_from_exact_3_designs} and Theorem \ref{th:general_recovery_from_approx_3_designs}.
	
	\subsection{Proof of Theorem \ref{th:general_recovery_from_exact_3_designs}}
	
	Since some calculations of the proof of Theorem \ref{th:general_recovery_from_exact_3_designs} can be written more concisely for super-normalized designs, we assume in the following the setting described in Remark \ref{rem:equivalent_formulation_for_super_normalized_design}.
	As already stated this proof follows the same outline as the one presented in \cite[Theorem 1.3]{kabanava2016stable} which itself is partly based on \cite[Theorem 2]{kueng2017low}. Note that the same proof structure was also used in \cite{kueng2016low}.\\
	We start by introducing the Frobenius-robust rank null space property which was first introduced and used in \cite{kabanava2016stable}.
	\begin{definition}
	For $q\geq 1$, we say that $\mathcal{A}\colon \mathcal{H}_n\to \C^m$ satisfies the Frobenius-robust rank null space property with respect to $\ell_q$ of order $r$ with constants $0<\rho<1$ and $\tau>0$ if for all $Z\in\mathcal{H}_n$ the inequality
	\begin{align*}
		\Vert Z_r\Vert_F \leq \frac{\rho}{\sqrt{r}}\Vert Z_c\Vert_{\ast} +\tau\Vert \mathcal{A}(Z)\Vert_{\ell_q}
	\end{align*}
	holds.
	\end{definition}

	In order to prove Theorem \ref{th:general_recovery_from_exact_3_designs} we will show that the measurement process $\mathcal{A}$ considered in Theorem \ref{th:general_recovery_from_exact_3_designs} satisfies the Frobenius-robust rank null space property with respect to $\ell_q$ of order $r$.\\
	Therefore, define the set
	\[
	T_{\rho,r}=\left\{Z\in \mathcal{H}_n\, :\,\Vert Z\Vert_F=1,\, \Vert Z_r\Vert_F>\frac{\rho}{\sqrt{r}}\Vert Z_c\Vert_{\ast} \right\}.
	\]
	The key ingredient in establishing the Frobenius-robust rank null space property is Mendelson's small ball method \cite{koltchinskii2015bounding, mendelson2015learning} as used in \cite{tropp2015convex,kabanava2016stable}.
	
	\begin{theorem}[\cite{kabanava2016stable}]\label{th:mendelson_small_ball}
		Fix a set $E\in \R^d$. Let $\phi$ be a random vector on $\R^d$, and let $\phi_1,\ldots,\phi_m$ be independent copies of $\phi$. For $\theta>0$ let
		\[
		Q_{\theta}(E;\phi)=\inf_{u\in E} \Prob\big(\vert\langle \phi,u \rangle\vert \geq \theta\big).
		\]
		Let $\varepsilon_1,\ldots,\varepsilon_m$ be independent Rademacher variables, independent from everything else, and define
		\[
		W_m(E;\phi)=\E\sup_{u\in E}\langle h, u\rangle,
		\]
		where $h=\frac{1}{\sqrt{m}}\sum_{j=1}^m \varepsilon_j \phi_j$. Then, for any $\theta>0$, $t\geq 0$ and $q\geq 1$,
		\[
		\inf_{u\in E}\Bigg(\sum_{j=1}^m\vert\langle \phi_j,u\rangle\vert^q\Bigg)^{\frac{1}{q}}\geq m^{\frac{1}{q}-\frac{1}{2}}(\theta\sqrt{m}Q_{2\theta}(E;\phi)-2W_m(E;\phi)-\theta t)
		\]
		with probability at least $1-e^{-2t^2}$.
	\end{theorem}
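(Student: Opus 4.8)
The plan is to establish the bound by running Mendelson's small ball method \cite{koltchinskii2015bounding,mendelson2015learning} in the form used in \cite{tropp2015convex,kabanava2016stable}. Fix $\theta>0$. First I would reduce the $\ell_q$-sum to an $\ell_1$-sum: for $q\ge 1$ the power-mean inequality gives $\big(\sum_{j=1}^m|\langle\phi_j,u\rangle|^q\big)^{1/q}\ge m^{1/q-1}\sum_{j=1}^m|\langle\phi_j,u\rangle|$ for every $u$, so it suffices to lower bound $\inf_{u\in E}\sum_j|\langle\phi_j,u\rangle|$ uniformly. To pass to a quantity with controlled increments I would introduce the surrogate $\varphi_\theta(s)=\min\{\theta,\max\{0,s-\theta\}\}$, which is $1$-Lipschitz, satisfies $\theta\,\mathbbm{1}\{s\ge 2\theta\}\le\varphi_\theta(s)\le\theta\,\mathbbm{1}\{s\ge\theta\}\le s$ for $s\ge0$, takes values in $[0,\theta]$, and — crucially — is such that $s\mapsto\varphi_\theta(|s|)$ is $1$-Lipschitz on $\R$ and vanishes at $0$. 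Then $\sum_j|\langle\phi_j,u\rangle|\ge\sum_j\varphi_\theta(|\langle\phi_j,u\rangle|)$, while $\E\,\varphi_\theta(|\langle\phi,u\rangle|)\ge\theta\,\Prob(|\langle\phi,u\rangle|\ge 2\theta)\ge\theta\,Q_{2\theta}(E;\phi)$, so the associated expectation is at least $m\theta\,Q_{2\theta}(E;\phi)$.

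Next I would control the uniform deviation $Z:=\sup_{u\in E}\big(\E\sum_j\varphi_\theta(|\langle\phi,u\rangle|)-\sum_j\varphi_\theta(|\langle\phi_j,u\rangle|)\big)$, so that $\sum_j\varphi_\theta(|\langle\phi_j,u\rangle|)\ge m\theta\,Q_{2\theta}(E;\phi)-Z$ for all $u\in E$ simultaneously. The mean $\E Z$ I would bound by the standard symmetrization inequality, $\E Z\le 2\,\E\sup_{u\in E}\sum_j\varepsilon_j\varphi_\theta(|\langle\phi_j,u\rangle|)$, followed by the Rademacher contraction principle, which is legitimate exactly because $s\mapsto\varphi_\theta(|s|)$ is $1$-Lipschitz with $\varphi_\theta(0)=0$; this strips off $\varphi_\theta$ and leaves $\E Z\le 2\,\E\sup_{u\in E}\langle\sum_j\varepsilon_j\phi_j,u\rangle=2\sqrt m\,W_m(E;\phi)$. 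Since $\varphi_\theta\in[0,\theta]$, replacing a single $\phi_j$ changes $Z$ by at most $\theta$, so McDiarmid's bounded-differences inequality gives $\Prob\big(Z\ge\E Z+\theta t\sqrt m\big)\le e^{-2t^2}$, i.e.\ $Z\le 2\sqrt m\,W_m(E;\phi)+\theta t\sqrt m$ with probability at least $1-e^{-2t^2}$.

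Finally I would assemble the pieces: on that event, for every $u\in E$,
\[
\Big(\sum_{j=1}^m|\langle\phi_j,u\rangle|^q\Big)^{1/q}\ \ge\ m^{1/q-1}\sum_{j=1}^m\varphi_\theta(|\langle\phi_j,u\rangle|)\ \ge\ m^{1/q-1}\big(m\theta\,Q_{2\theta}(E;\phi)-2\sqrt m\,W_m(E;\phi)-\theta t\sqrt m\big),
\]
and since $m^{1/q-1}\cdot m\theta Q = m^{1/q-1/2}\,\theta\sqrt m\,Q$, $m^{1/q-1}\cdot 2\sqrt m W_m=2m^{1/q-1/2}W_m$, and $m^{1/q-1}\cdot\theta t\sqrt m=\theta t\,m^{1/q-1/2}$, the right-hand side equals $m^{1/q-1/2}\big(\theta\sqrt m\,Q_{2\theta}(E;\phi)-2W_m(E;\phi)-\theta t\big)$; taking the infimum over $u\in E$ on the left completes the argument.

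I do not expect a genuine obstruction, since this is by now a standard scheme; the one place that needs care is the normalization of the surrogate in the first paragraph. Its range must be exactly $[0,\theta]$ (rather than $[0,1]$): this is precisely what makes the symmetrization–contraction step output $2\sqrt m\,W_m$ and the bounded-differences increments equal to $\theta$, so that the final arithmetic reproduces the stated constants $2$ and $\theta t$ instead of $\theta$-rescaled versions of them. One also has to verify the small, easily overlooked point that post-composing with $|\cdot|$ preserves both the Lipschitz constant and the vanishing at the origin, which is exactly what the contraction principle requires; everything else is routine.
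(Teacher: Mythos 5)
The paper imports this theorem verbatim from \cite{kabanava2016stable} and gives no proof of its own, so there is nothing in-paper to compare against; your reconstruction is the standard proof of Mendelson's small ball method from the cited sources \cite{tropp2015convex,kabanava2016stable,koltchinskii2015bounding} (soft indicator bracketed between $\theta\,\mathbbm{1}\{s\ge 2\theta\}$ and $s$, symmetrization plus Rademacher contraction, bounded differences), and it is correct as written — including the two points you flag, the $[0,\theta]$ normalization of the surrogate and the fact that $s\mapsto\varphi_\theta(|s|)$ remains a contraction vanishing at $0$, which are exactly what produce the stated constants.
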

	
	We identify $\mathcal{H}_n$ with $\R^{n^2}$. Then, we want to use Theorem \ref{th:mendelson_small_ball} for the set $E=T_{\rho,r}$. Therefore, we need to bound
	\begin{align*}
		Q_{2\theta}=\inf_{Z\in T_{\rho,r}}\Prob\big(\vert \tr(aa^{\ast}Z)\vert\geq2\theta\big) \quad\text{and}\quad W_m=\E\sup_{Z\in T_{\rho,r}}\tr(HZ)
	\end{align*}
	with $a$ be randomly drawn from $\{p_i,\widetilde{w}_i\}_{1\leq i\leq N}$ and $H=\frac{1}{\sqrt{m}}\sum_{j=1}^m \varepsilon_j a_j a_j^{\ast}$.
	For $W_m$ we will use known bounds, while the lower bound for $Q_{2\theta}$ requires the following lemma, in which we use a variant of the Paley-Zygmund inequality tailored to the setting of $3$-designs. 
	\begin{lemma}\label{lem:Q_bound}
		Let $a$ be randomly drawn from an exact super-normalized $3$-design $\{p_i,\widetilde{w}_i\}_{1\leq i\leq N}$. For every $0\leq \theta \leq 1$ it holds
		\[
		Q_{\theta}(T_{\rho,r};aa^{\ast})=\inf_{Z\in T_{\rho,r}}\Prob\left(\vert\tr(a a^{\ast}Z)\vert\geq \theta \right)\geq (1-\theta^2)^3\frac{1}{36(1+(1+\rho^{-1})^2)r}.
		\]
		\begin{proof}
			Let $Z\in T_{\rho,r}$. Our key contribution is the observation that a variant of Payley Zygmund stated in Lemma \ref{lem:payley_zygmund_variant} can be used instead of the classic Payley Zygmund inequality. This makes it possible to give a lower bound for $Q_{\theta}$. For $p=\frac{3}{2}$ the inequality implies that for every $0\leq \theta\leq 1$ it holds
			\begin{align}\label{eq:payley_zygmund_used}
				\Prob\Big(\vert \tr(aa^{\ast} Z)\vert^2 \geq \theta^2 \E\vert\tr(aa^{\ast}Z)\vert^2 \Big)\geq (1-\theta^2)^3 \frac{(\E \vert \tr(aa^{\ast} Z)\vert^2)^{3}}{\left(\E \vert \tr(aa^{\ast} Z)\vert^3\right)^{2}}.
			\end{align}
		Thus, we are interested in bounds for the second and third moment. Following the arguments of \cite[Proposition 12]{kueng2017low} we get
			\begin{align*}
				\E\vert\tr(aa^{\ast}Z)\vert^2&=\E\tr(aa^{\ast}Z)^2=\sum_{i=1}^N p_i \tr(\widetilde{w}_i\widetilde{w}_i^{\ast}Z)^2=n(n+1)\sum_{i=1}^N p_i \tr\left((w_i w_i^{\ast}Z)^{\otimes 2}\right) \\
				&=n(n+1)\sum_{i=1}^N p_i \tr\left((w_i w_i^{\ast})^{\otimes 2}Z^{\otimes 2}\right)=n(n+1) \tr\left(\sum_{i=1}^N p_i(w_i w_i^{\ast})^{\otimes 2}Z^{\otimes 2}\right)\\
				&=2 \tr\left(P_{\text{Sym}^2}Z^{\otimes 2}\right),
			\end{align*}
			where we used the design property in the last step. Now, \cite[Lemma 17]{kueng2017low} and $\Vert Z\Vert_F=1$ implies that
			\begin{align}\label{eq:second_mom_bound}
				\E\vert \tr(aa^{\ast}Z)\vert^2=\tr(Z)^2+\tr(Z^2)=\tr(Z)^2+1\geq \max\{1,\tr(Z)^2\}.
			\end{align}
			We can not immediately adapt the arguments for the third moment since $\vert \tr(aa^{\ast} Z)\vert^3\neq  \tr(aa^{\ast} Z )^3$ in general. But we can calculate an upper bound using
			\begin{align*}
				\vert \tr(aa^{\ast}Z)\vert^3=\tr(aa^{\ast}Z)^2\cdot\vert \tr(aa^{\ast}Z)\vert\leq \tr(aa^{\ast}Z)^2  \tr(aa^{\ast}\vert Z\vert).
			\end{align*}
			This gives analogously to the second moment the bound
			\begin{align*}
				\E\vert\tr(aa^{\ast}Z)\vert^3&\leq\E\tr(aa^{\ast}Z)^2  \tr(aa^{\ast}\vert Z\vert)=\sum_{i=1}^N p_i \tr(\widetilde{w}_i\widetilde{w}_i^{\ast}Z)^2 \tr(\widetilde{w}_i\widetilde{w}_i^{\ast}\vert Z\vert)\\
				&=\sqrt[4]{(n+1)n}^{\,6}\sum_{i=1}^N p_i \tr\left((w_i w_i^{\ast}Z)^{\otimes 2}\otimes w_i w_i^{\ast}\vert Z\vert\right) \\
				&=\sqrt[4]{(n+1)n}^{\,6}\sum_{i=1}^N p_i \tr\left((w_i w_i^{\ast})^{\otimes 3}\left(Z^{\otimes 2}\otimes \vert Z\vert\right)\right)\\
				&=\sqrt[4]{(n+1)n}^{\,6} \tr\left(\sum_{i=1}^N p_i(w_i w_i^{\ast})^{\otimes 3}\left(Z^{\otimes 2}\otimes \vert Z\vert\right)\right)\\
				&=\frac{3!\sqrt[4]{(n+1)n}^{\,6}}{(n+2)(n+1)n} \tr\left(P_{\text{Sym}^3}\left(Z^{\otimes 2}\otimes \vert Z\vert\right)\right)\\
				&=\frac{\sqrt{n(n+1)}}{n+2}\, 6\tr\left( P_{\text{Sym}^3}\left(Z^{\otimes 2}\otimes \vert Z\vert\right)\right).
			\end{align*}
			The trace was explicitly calculated in Lemma \ref{lem:trace_of_sym_map}. Thus,
			\begin{align*}
				\E\vert\tr(aa^{\ast}Z)\vert^3\leq \frac{\sqrt{n(n+1)}}{n+2} \left(\tr(Z)^2\tr(\vert Z\vert)+2\tr(Z)\tr(Z\vert Z\vert)+\tr(Z^2)\tr(\vert Z\vert)+2\tr(Z^2\vert Z\vert)\right).
			\end{align*}
			Since $\Vert \vert Z\vert \Vert_F=\Vert Z\Vert_F=1$, Hölder's inequality implies $\tr(Z\vert Z\vert)\leq\Vert Z\Vert_F \Vert \vert Z\vert \Vert_F=1$ and the monotonicity of the Schatten-norm gives $\tr(Z^2 \vert Z\vert)=\tr(\vert Z\vert^3)\leq \Vert Z\Vert_F^3=1$. Since $Z\in T_{\rho,r}$, \cite[Lemma 3.4]{kabanava2016stable} implies that we can write $Z=\sqrt{1+(1+\rho^{-1})^2}\sum_{j=1}^L t_j Y_j$ with $t_j\in[0,1]$, $\sum_{j=1}^L t_j=1$, $Y_j\in\mathcal{H}_n$, $\Vert Y_j\Vert_F=1$ and $\text{rank}(Y_j)\leq r$. Thus, 
			\begin{align}\label{eq:trace_inequality_for_T_rho_r}
				\tr(\vert Z\vert)&=\sqrt{1+(1+\rho^{-1})^2}\tr\left(\left\vert \sum_{j=1}^L t_j Y_j\right\vert\right)\leq \sqrt{1+(1+\rho^{-1})^2}\sum_{j=1}^L t_j \tr\left(\left\vert  Y_j\right\vert\right)\nonumber\\
				&\leq \sqrt{1+(1+\rho^{-1})^2}\sum_{j=1}^L t_j \,\sqrt{\text{rank}(Y_j)} \Vert Y_j\Vert_F\leq\sqrt{1+(1+\rho^{-1})^2}  \sqrt{r},
			\end{align}
			where we used that $\Vert\cdot\Vert_{\ast}=\tr(\vert\cdot\vert)$ defines a norm. Lastly, it holds $\frac{\sqrt{n(n+1)}}{n+2}\leq 1$. Putting everything together gives
			\begin{align}\label{eq:third_mom_bound}
				\E \vert\tr(aa^{\ast} \vert Z\vert)\vert^3 &\leq \sqrt{1+(1+\rho^{-1})^2}\sqrt{r} \tr(Z)^2+2\tr(Z)+\sqrt{1+(1+\rho^{-1})^2}\sqrt{r}+2\nonumber\\
				&\leq 6\sqrt{1+(1+\rho^{-1})^2}\sqrt{r} \max\{1,\tr(Z)^2\}.
			\end{align}
			The equations (\ref{eq:payley_zygmund_used}),(\ref{eq:second_mom_bound}) and (\ref{eq:third_mom_bound}) together imply
			\begin{align*}
				\Prob\left( \vert\tr(aa^{\ast}Z)\vert \geq \theta\right)&\geq \Prob\Big(\vert \tr(aa^{\ast} Z)\vert^2 \geq \theta^2 \E\vert\tr(aa^{\ast}Z)\vert^2 \Big)\\
				&\geq (1-\theta^2)^3 \frac{\left(\max\{1,\tr(Z)^2\}\right)^{3}}{\left(6\sqrt{1+(1+\rho^{-1})^2}\sqrt{r} \max\{1,\tr(Z)^2\}\right)^{2}}\\
				&\geq (1-\theta^2)^3\frac{1}{36(1+(1+\rho^{-1})^2)r}
			\end{align*}
			for every $0\leq\theta\leq 1$.
		\end{proof}
	\end{lemma}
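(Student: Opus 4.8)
The plan is to fix an arbitrary $Z\in T_{\rho,r}$, establish a lower bound for $\Prob(|\tr(aa^\ast Z)|\geq\theta)$ that does not depend on $Z$, and then take the infimum over $T_{\rho,r}$. Since $Z$ is Hermitian and $aa^\ast$ is positive semidefinite, $\tr(aa^\ast Z)=\langle a,Za\rangle$ is a real random variable, so small-ball estimates are available. The decisive step is to apply a Paley--Zygmund-type inequality to the nonnegative random variable $\tr(aa^\ast Z)^2$. The classical version would force us to control $\E\,\tr(aa^\ast Z)^4$, i.e.\ a fourth moment, which is only accessible from a $4$-design; instead I would use the variant with exponent $p=\tfrac{3}{2}$ recorded in Lemma~\ref{lem:payley_zygmund_variant}, which for every $0\le\theta\le1$ gives
\begin{align*}
\Prob\big(|\tr(aa^\ast Z)|^2\geq\theta^2\,\E|\tr(aa^\ast Z)|^2\big)\geq(1-\theta^2)^3\,\frac{\big(\E|\tr(aa^\ast Z)|^2\big)^3}{\big(\E|\tr(aa^\ast Z)|^3\big)^2}.
\end{align*}
Thus everything reduces to computing the second moment exactly and bounding the third moment from above, uniformly over $T_{\rho,r}$.

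For the second moment I would follow \cite[Proposition~12]{kueng2017low}: rewriting $|\tr(aa^\ast Z)|^2$ as a trace of $(aa^\ast)^{\otimes2}$ against $Z^{\otimes2}$, using the super-normalization and invoking the $3$-design property (which in particular pins down the second moment through $P_{\sym^2}$), one gets $\E|\tr(aa^\ast Z)|^2=2\tr(P_{\sym^2}Z^{\otimes2})$, which by \cite[Lemma~17]{kueng2017low} equals $\tr(Z)^2+\tr(Z^2)=\tr(Z)^2+1\geq\max\{1,\tr(Z)^2\}$ since $\|Z\|_F=1$. The third moment is the crux: because $|\tr(aa^\ast Z)|^3\ne\tr(aa^\ast Z)^3$ in general, the design property, which is adapted to integer powers of linear functionals, does not apply directly. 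The workaround is the pointwise estimate $|\tr(aa^\ast Z)|^3=\tr(aa^\ast Z)^2\,|\langle a,Za\rangle|\le\tr(aa^\ast Z)^2\,\langle a,|Z|a\rangle=\tr(aa^\ast Z)^2\,\tr(aa^\ast|Z|)$, whose right-hand side is a degree-$3$ polynomial in $aa^\ast$ and hence evaluable from the $3$-design: it equals a dimensional prefactor (at most $1$) times $\tr(P_{\sym^3}(Z^{\otimes2}\otimes|Z|))$. I would then insert the explicit formula for $\tr(P_{\sym^3}(M_1\otimes M_2\otimes M_3))$ from Lemma~\ref{lem:trace_of_sym_map}, obtaining an expansion into the four terms $\tr(Z)^2\tr(|Z|)$, $2\tr(Z)\tr(Z|Z|)$, $\tr(Z^2)\tr(|Z|)$ and $2\tr(Z^2|Z|)$.

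Each of these four terms is then bounded using $\|Z\|_F=1$: Hölder gives $\tr(Z|Z|)\le1$, monotonicity of Schatten norms gives $\tr(Z^2|Z|)=\tr(|Z|^3)\le\|Z\|_F^3=1$, and $\tr(Z^2)=1$. The only term that is not $O(1)$ is $\tr(|Z|)=\|Z\|_\ast$, and this is exactly where the hypothesis $Z\in T_{\rho,r}$ enters: by the structural decomposition of \cite[Lemma~3.4]{kabanava2016stable} one may write $Z=\sqrt{1+(1+\rho^{-1})^2}\sum_j t_jY_j$ with $t_j\in[0,1]$, $\sum_j t_j=1$, $\|Y_j\|_F=1$ and $\mathrm{rank}(Y_j)\le r$, so that the triangle inequality for $\|\cdot\|_\ast$ together with $\|Y_j\|_\ast\le\sqrt{\mathrm{rank}(Y_j)}\,\|Y_j\|_F$ yields $\|Z\|_\ast\le\sqrt{1+(1+\rho^{-1})^2}\sqrt r$. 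Bounding each of the four terms crudely against $\max\{1,\tr(Z)^2\}$ gives $\E|\tr(aa^\ast Z)|^3\le 6\sqrt{1+(1+\rho^{-1})^2}\sqrt r\,\max\{1,\tr(Z)^2\}$. Plugging the second-moment lower bound and this third-moment upper bound into the Paley--Zygmund variant, the common factor $\max\{1,\tr(Z)^2\}$ cancels (cubed in the numerator against squared in the denominator, with one surviving power that is $\ge1$), leaving the claimed bound $(1-\theta^2)^3\big/\big(36(1+(1+\rho^{-1})^2)r\big)$, uniformly in $Z\in T_{\rho,r}$; taking the infimum completes the proof. I expect the third-moment step — the dominance trick $|\langle a,Za\rangle|\le\langle a,|Z|a\rangle$ that converts a non-polynomial quantity into one the $3$-design can integrate, together with the $\|Z\|_\ast\lesssim\sqrt r$ estimate on $T_{\rho,r}$ — to be the only genuinely delicate ingredient, the remainder being bookkeeping with the design identities and standard norm inequalities.
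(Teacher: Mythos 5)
Your proposal is correct and follows essentially the same route as the paper's proof: the $p=\tfrac{3}{2}$ Paley--Zygmund variant, the exact second-moment computation via the design property, the dominance trick $|\tr(aa^\ast Z)|^3\le\tr(aa^\ast Z)^2\tr(aa^\ast|Z|)$ to make the third moment accessible to the $3$-design, the expansion of $\tr(P_{\sym^3}(Z^{\otimes2}\otimes|Z|))$, and the bound $\Vert Z\Vert_\ast\le\sqrt{1+(1+\rho^{-1})^2}\sqrt{r}$ on $T_{\rho,r}$. All the key steps and the final bookkeeping match the paper.
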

	
	The following lemma is contained in the proof of \cite[Theorem 1.1]{kabanava2016stable}.
	
	\begin{lemma}\label{lem:W_m_bound}
		Let $a$ be randomly drawn from an exact super-normalized $3$-design $\{p_i,\widetilde{w}_i\}_{1\leq i\leq N}$. It holds
		\begin{align*}
			W_m(T_{\rho,r};aa^{\ast})=\E\sup_{Z\in T_{\rho,r}}\tr(HZ)\leq \sqrt{1+(1+\rho^{-1})^2} \sqrt{r} \E\Vert H\Vert_{\infty},
		\end{align*}
		where $H=\frac{1}{\sqrt{m}}\sum_{j=1}^m \varepsilon_j a_j a_j^{\ast}$.
		\begin{proof}
			We use Hölder's inequality and (\ref{eq:trace_inequality_for_T_rho_r}) to obtain
			\begin{align*}
				W_m=\E\sup_{Z\in T_{\rho,r}}\tr(HZ)\leq \sup_{Z\in T_{\rho,r}} \Vert Z\Vert_{\ast}\, \E\Vert H\Vert_{\infty}\leq \sqrt{1+(1+\rho^{-1})^2} \sqrt{r} \E\Vert H\Vert_{\infty}.
			\end{align*}
			
		\end{proof}
	\end{lemma}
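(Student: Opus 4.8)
The plan is to decouple the random supremum $W_m(T_{\rho,r};aa^{\ast})=\E\sup_{Z\in T_{\rho,r}}\tr(HZ)$ into a purely geometric quantity — a uniform bound on the nuclear norm of the elements of $T_{\rho,r}$ — times the expected operator norm $\E\Vert H\Vert_{\infty}$, which can then be estimated separately by a matrix-concentration argument.

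First I would apply trace duality (Hölder's inequality for Schatten norms): for every realisation of $H=\frac{1}{\sqrt{m}}\sum_{j=1}^m\varepsilon_j a_j a_j^{\ast}$ and every $Z\in\mathcal{H}_n$ one has $\tr(HZ)\leq\Vert H\Vert_{\infty}\Vert Z\Vert_{\ast}$. Taking the supremum over $Z\in T_{\rho,r}$ and then the expectation gives
\[
W_m(T_{\rho,r};aa^{\ast})\leq\Big(\sup_{Z\in T_{\rho,r}}\Vert Z\Vert_{\ast}\Big)\,\E\Vert H\Vert_{\infty}.
\]
It therefore remains to bound $\Vert Z\Vert_{\ast}$ uniformly over $T_{\rho,r}$. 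Here I would reuse the structural description of $T_{\rho,r}$ from \cite[Lemma 3.4]{kabanava2016stable}: any $Z\in T_{\rho,r}$ can be written as $Z=\sqrt{1+(1+\rho^{-1})^2}\sum_{j=1}^L t_j Y_j$ with $t_j\in[0,1]$, $\sum_j t_j=1$, $\Vert Y_j\Vert_F=1$ and $\text{rank}(Y_j)\leq r$. Using that $\Vert\cdot\Vert_{\ast}=\tr(\vert\cdot\vert)$ is a norm, the triangle inequality, the elementary bound $\Vert Y_j\Vert_{\ast}\leq\sqrt{\text{rank}(Y_j)}\,\Vert Y_j\Vert_F\leq\sqrt{r}$, and $\sum_j t_j=1$ yield $\Vert Z\Vert_{\ast}\leq\sqrt{1+(1+\rho^{-1})^2}\,\sqrt{r}$ — which is precisely the chain of inequalities already recorded as (\ref{eq:trace_inequality_for_T_rho_r}) in the proof of Lemma \ref{lem:Q_bound}. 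Substituting this into the previous display gives the claimed inequality.

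I do not expect a genuine obstacle in this lemma: the only point needing care is to apply the triangle inequality before estimating each rank-$r$ summand, and all the information about the geometry of $T_{\rho,r}$ needed for that is available off the shelf. The substantive difficulty lies downstream, namely in controlling $\E\Vert H\Vert_{\infty}$ — the operator norm of a Rademacher sum of rank-one (super-normalised) design matrices — which is a matrix-Khintchine / matrix-concentration estimate handled separately and is not part of this statement.
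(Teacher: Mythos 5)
Your proposal is correct and follows exactly the paper's argument: trace duality to pull out $\sup_{Z\in T_{\rho,r}}\Vert Z\Vert_{\ast}\,\E\Vert H\Vert_{\infty}$, followed by the convex-decomposition bound from \cite[Lemma 3.4]{kabanava2016stable} recorded in (\ref{eq:trace_inequality_for_T_rho_r}) to get $\Vert Z\Vert_{\ast}\leq\sqrt{1+(1+\rho^{-1})^2}\sqrt{r}$. No gaps.
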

	
	We are now well equipped to prove Theorem \ref{th:general_recovery_from_exact_3_designs}.
	
	\begin{proof}[Proof of Theorem \ref{th:general_recovery_from_exact_3_designs}]
	Mendelson' small ball method as described above together with Lemma \ref{lem:Q_bound} and Lemma \ref{lem:W_m_bound} gives for $0<\theta\leq \frac{1}{2}$, $t\geq 0$ and $q\geq 1$,
	\begin{align*}
		\inf_{Z\in T_{\rho,r}} \left(\sum_{j=1}^m \vert \tr(a_j a_j^{\ast}Z)\vert^q\right)^{\frac{1}{q}}&\geq m^{\frac{1}{q}-\frac{1}{2}}(\theta\sqrt{m}Q_{2\theta}-2W_m-\theta t)\\
		&\geq m^{\frac{1}{q}-\frac{1}{2}} \left(\frac{\theta \sqrt{m} (1-\theta^2)^3}{36(1+(1+\rho^{-1})^2)r}-2\sqrt{1+(1+\rho^{-1})^2} \sqrt{r} \E\Vert H\Vert_{\infty}-\theta t\right)
	\end{align*}	
	with probability at least $1-e^{-2t^2}$. 
	Choosing $\theta=\frac{1}{4}$ and $t=C_4\frac{\sqrt{m}}{\rho^{-2}r}$ with $C_4$ small enough as well as $C_1$ with $m\geq C_1 \rho^{-6} r^3 n\log(2n)$ large enough and using \cite[Proposition 13]{kueng2017low}, gives
	\begin{align*}
		&m^{\frac{1}{q}-\frac{1}{2}}\left(\frac{\theta \sqrt{m} (1-\theta^2)^3}{36(1+(1+\rho^{-1})^2)r}-2\sqrt{1+(1+\rho^{-1})^2} \sqrt{r} \E\Vert H\Vert_{\infty}-\theta t\right)\\
		&\geq\frac{3375 m^{\frac{1}{q}}}{589824(1+(1+\rho^{-1})^2)r}-2\cdot3.1049 \sqrt{1+(1+\rho^{-1})^2}\sqrt{r}\sqrt{n\log(2n)}m^{\frac{1}{q}-\frac{1}{2}}- \frac{C_4}{4\rho^{-2}r}m^{\frac{1}{q}}\\
		&\geq\frac{3375 m^{\frac{1}{q}}}{589824(1+(1+\rho^{-1})^2)r}-6.2098 \frac{\sqrt{1+(1+\rho^{-1})^2}}{\rho^{-3}}\frac{1}{\sqrt{C_1} r} m^{\frac{1}{q}}- \frac{C_4}{4\rho^{-2}r}m^{\frac{1}{q}}\\
		&\geq\frac{m^{\frac{1}{q}}}{\rho^{-2}r}\left(\frac{3375}{5\cdot 589824}-\frac{6.2098\cdot \sqrt{5}}{\sqrt{C_1}}-\frac{C_4}{4}\right)\geq \frac{m^{\frac{1}{q}}}{C_3\rho^{-2}r},
	\end{align*}
	where we used $1+(1+\rho^{-1})^2\leq 5\rho^{-2}$ for all $0<\rho<1$ in the last line. In summary, we have
	\begin{align*}
		\inf_{Z\in T_{\rho,r}} \left(\sum_{j=1}^m \vert \tr(a_j a_j^{\ast}Z)\vert^q\right)^{\frac{1}{q}}\geq  \frac{m^{\frac{1}{q}}}{C_3\rho^{-2}r}
	\end{align*}
	with probability at least $1-\exp\left(-\frac{2C_4^2m}{\rho^{-4}r^2}\right)$. Then, \cite[Lemma 3.3]{kabanava2016stable} says that $\mathcal{A}$ satisfies the Frobenius-robust rank null space property with respect to $\ell_q$ of order $r$ with constants $\rho$ and $\frac{C_3\rho^{-2}r}{m^{\frac{1}{q}}}$. The result now follows from \cite[Theorem 3.1]{kabanava2016stable}\footnote{Strictly speaking, we use a variant of \cite[Theorem 3.1]{kabanava2016stable} for Hermitian matrices, as described in \cite[Remark 3.2]{kabanava2016stable}.} which in a nutshell states that if $\mathcal{A}$ satisfies the Frobenius-robust rank null space property, then nuclear norm minimization is stable and robust.
	\end{proof}
	
	\subsection{Proof of Theorem \ref{th:general_recovery_from_approx_3_designs}}
	
	Again, we will prove Theorem \ref{th:general_recovery_from_approx_3_designs} in the setting described in Remark \ref{rem:equivalent_formulation_for_super_normalized_design}. Most of the proof parallels the exact design case. Inspecting the proof of Lemma \ref{lem:W_m_bound} shows that it is true for approximate designs as well. It remains to prove a generalized version of Lemma \ref{lem:Q_bound}. This can be done by adapting \cite[Generalized version of Proposition 12]{kueng2017low}.
	
	\begin{lemma}\label{lem:gen_bound_for_Q}
		Let the vector $a$ be randomly drawn from an approximate super-normalized $3$-design $\{p_i,\widetilde{w}_i\}_{1\leq i\leq N}$ of accuracy either $\theta_{1}\leq \frac{1}{4}$ or $\theta_{\infty}\leq \frac{1}{4r(1+(1+\rho^{-1})^2)}$. For every $0\leq \theta \leq 1$ it holds
		\[
		Q_{\theta}(T_{\rho,r};aa^{\ast})=\inf_{Z\in T_{\rho,r}}\Prob\left(\vert\tr(a a^{\ast}Z)\vert\geq \theta \right)\geq (1-\theta^2)^3\frac{1}{450
			(1+(1+\rho^{-1})^2)r}.
		\]
		\begin{proof}
			Let $Z\in T_{\rho,r}$. We start by considering the case where $\theta_{\infty}\leq \frac{1}{4r(1+(1+\rho^{-1})^2)}$. Paralleling the proof of \cite[Generalized version of Proposition 12]{kueng2017low} we get the following second moment bound
			\begin{align}\label{eq:second_mom_bound_approx_design}
				\E\vert\tr(aa^{\ast}Z)\vert^2&=(n+1)n\tr\left(\sum_{i=1}^Np_i (w_iw_i^{\ast})^{\otimes 2} Z^{\otimes 2}\right)\nonumber\\
				&=2\tr(P_{\sym^2} Z^{\otimes})+ (n+1)n\tr\left(\left(\sum_{i=1}^Np_i (w_iw_i^{\ast})^{\otimes 2}-\binom{n+1}{2}^{-1}P_{\sym^2}\right) Z^{\otimes 2}\right)\nonumber\\
				&\geq 2\tr(P_{\sym^2} Z^{\otimes})- (n+1)n \left\Vert \sum_{i=1}^Np_i (w_iw_i^{\ast})^{\otimes 2}-\binom{n+1}{2}^{-1}P_{\sym^2}\right\Vert_{\infty} \Vert Z^{\otimes 2}\Vert_{\ast}\nonumber\\
				&\geq 2\tr(P_{\sym^2} Z^{\otimes})-2\theta_{\infty} (1+(1+\rho^{-1})^2)r,
			\end{align}
			where we used Hölder's inequality, the approximate design property and (\ref{eq:trace_inequality_for_T_rho_r}). Now, \cite[Lemma 17]{kueng2017low} and $\theta_{\infty}\leq \frac{1}{4r(1+(1+\rho^{-1})^2)}$ gives
			\begin{align*}
				\E\vert\tr(aa^{\ast}Z)\vert^2\geq \tr(Z^2)+\tr(Z)^2-2\theta_{\infty} (1+(1+\rho^{-1})^2)r \geq \frac{1}{2}+\tr(Z)^2\geq \frac{1}{2}\max\{1,\tr(Z)^2\}.
			\end{align*}
			Instead of fourth moment bound (as in \cite{kueng2017low}) we will now derive a third moment bound and use the variant of Payley Zygmund. Likewise to (\ref{eq:second_mom_bound_approx_design}) we get
			\begin{align}\label{eq:third_mom_bound_approx_design}
				\E\vert \tr(aa^{\ast}Z)\vert^3 &\leq \sqrt[4]{(n+1)n}^{\,6} \tr\left(\sum_{i=1}^N p_i(w_i w_i^{\ast})^{\otimes 3}\left(Z^{\otimes 2}\otimes \vert Z\vert\right)\right)\nonumber\\
				&\leq \sqrt[4]{(n+1)n}^{\,6} \left\Vert \sum_{i=1}^N p_i(w_i w_i^{\ast})^{\otimes 3}-\binom{n+2}{3}^{-1}P_{\sym^3} \right\Vert_{\infty} \left\Vert Z^{\otimes 2}\otimes \vert Z\vert  \right\Vert_{\ast}\nonumber\\
				&\quad+6\frac{\sqrt{n(n+1)}}{n+2}\tr\left(P_{\sym^3}Z^{\otimes 2}\otimes \vert Z\vert\right)\nonumber\\
				&\leq 6\frac{\sqrt{n(n+1)}}{n+2} \left(\theta_{\infty}\sqrt{(1+(1+\rho^{-1})^2)r}^3 +\tr\left(P_{\sym^3}Z^{\otimes 2}\otimes \vert Z\vert\right)\right).
			\end{align}
			A similar calculation as in (\ref{eq:third_mom_bound}) shows that
			\begin{align*}
				\E\vert\tr(aa^{\ast}Z)\vert^3&\leq 6 \theta_{\infty}\sqrt{(1+(1+\rho^{-1})^2)r}^3+6\sqrt{1+(1+\rho^{-1})^2}\sqrt{r} \max\{1,\tr(Z)^2\}\\
				&\leq \frac{15}{2}\sqrt{(1+(1+\rho^{-1})^2)r}\max\{1,\tr(Z)^2\},
			\end{align*}
			where we also used $\theta_{\infty}\leq \frac{1}{4r(1+(1+\rho^{-1})^2)}$. The same arguments as in Lemma \ref{lem:Q_bound} conclude the proof.\\
			The case $\theta_{1}\leq \frac{1}{4}$ follows analogously by just using Hölder's inequality in (\ref{eq:second_mom_bound_approx_design}) and (\ref{eq:third_mom_bound_approx_design}) the other way around, i.e.
			\begin{align*}
				&(n+1)n\tr\left(\left(\sum_{i=1}^Np_i (w_iw_i^{\ast})^{\otimes 2}-\binom{n+1}{2}^{-1}P_{\sym^2}\right) Z^{\otimes 2}\right)\\
				&\geq - (n+1)n \left\Vert \sum_{i=1}^Np_i (w_iw_i^{\ast})^{\otimes 2}-\binom{n+1}{2}^{-1}P_{\sym^2}\right\Vert_{\ast} \Vert Z^{\otimes 2}\Vert_{\infty}\\
				&\geq -2\theta_{1} \Vert Z\Vert_F^2=-\frac{1}{2}
			\end{align*}
			and
			\begin{align*}
				&\sqrt[4]{(n+1)n}^{\,6}\tr\left(\left(\sum_{i=1}^Np_i (w_iw_i^{\ast})^{\otimes 3}-\binom{n+2}{3}^{-1}P_{\sym^3}\right) Z^{\otimes 2}\otimes \vert Z\vert \right)\\
				&\leq  \sqrt[4]{(n+1)n}^{\,6}\left\Vert \sum_{i=1}^Np_i (w_iw_i^{\ast})^{\otimes 3}-\binom{n+2}{3}^{-1}P_{\sym^3}\right\Vert_{\ast} \Vert Z^{\otimes 2}\otimes \vert Z\vert\Vert_{\infty}\\
				&\leq 6\frac{\sqrt{n(n+1)}}{n+2}\theta_{1}\Vert Z\Vert_F^3= \frac{3}{2}\frac{\sqrt{n(n+1)}}{n+2}.
			\end{align*}
		\end{proof}
	\end{lemma}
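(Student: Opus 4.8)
The plan is to mimic the proof of Lemma~\ref{lem:Q_bound}, replacing the exact moment identities by their approximate counterparts and absorbing the resulting design-approximation errors into the constants. Fix $Z\in T_{\rho,r}$, so that $\Vert Z\Vert_F=1$ and $\Vert Z\Vert_\ast\le\sqrt{1+(1+\rho^{-1})^2}\sqrt r$ by (\ref{eq:trace_inequality_for_T_rho_r}). As before, the variant of the Paley--Zygmund inequality (Lemma~\ref{lem:payley_zygmund_variant}) with $p=3/2$ reduces the problem to
\[
\Prob\big(\vert\tr(aa^\ast Z)\vert^2\ge\theta^2\,\E\vert\tr(aa^\ast Z)\vert^2\big)\ge(1-\theta^2)^3\frac{(\E\vert\tr(aa^\ast Z)\vert^2)^3}{(\E\vert\tr(aa^\ast Z)\vert^3)^2},
\]
so it suffices to lower bound the second moment and upper bound the third. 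The two accuracy hypotheses $\theta_1\le\frac14$ and $\theta_\infty\le\frac{1}{4r(1+(1+\rho^{-1})^2)}$ are handled in parallel, the only difference being which way round Hölder's inequality is applied to the approximation error.

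For the second moment, I would expand $\E\vert\tr(aa^\ast Z)\vert^2=(n+1)n\,\tr\big(\sum_i p_i(w_iw_i^\ast)^{\otimes2}Z^{\otimes2}\big)$, split off the exact-design term $2\tr(P_{\sym^2}Z^{\otimes2})=\tr(Z)^2+\tr(Z^2)=\tr(Z)^2+1$ via \cite[Lemma 17]{kueng2017low}, and estimate the remainder by Hölder against $\sum_i p_i(w_iw_i^\ast)^{\otimes2}-\binom{n+1}{2}^{-1}P_{\sym^2}$ — pairing its operator norm with $\Vert Z^{\otimes2}\Vert_\ast=\Vert Z\Vert_\ast^2\le(1+(1+\rho^{-1})^2)r$ in the $\theta_\infty$ case, and its nuclear norm with $\Vert Z^{\otimes2}\Vert_\infty\le\Vert Z\Vert_F^2=1$ in the $\theta_1$ case. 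The thresholds are calibrated so that this error is at most $\frac12$, whence $\E\vert\tr(aa^\ast Z)\vert^2\ge\frac12+\tr(Z)^2\ge\frac12\max\{1,\tr(Z)^2\}$. For the third moment, I would start from $\vert\tr(aa^\ast Z)\vert^3\le\tr(aa^\ast Z)^2\tr(aa^\ast\vert Z\vert)$, obtain $\E\vert\tr(aa^\ast Z)\vert^3\le\sqrt[4]{(n+1)n}^{\,6}\tr\big(\sum_i p_i(w_iw_i^\ast)^{\otimes3}(Z^{\otimes2}\otimes\vert Z\vert)\big)$, split off $6\frac{\sqrt{n(n+1)}}{n+2}\tr(P_{\sym^3}(Z^{\otimes2}\otimes\vert Z\vert))$, and bound the remainder by Hölder again (with $\Vert Z^{\otimes2}\otimes\vert Z\vert\Vert_\ast=\Vert Z\Vert_\ast^3$ resp. $\Vert Z^{\otimes2}\otimes\vert Z\vert\Vert_\infty\le1$). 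The main term is handled exactly as in (\ref{eq:third_mom_bound}) via the explicit evaluation of $\tr(P_{\sym^3}(\cdot))$ from Lemma~\ref{lem:trace_of_sym_map}, the bounds $\tr(Z\vert Z\vert)\le1$, $\tr(Z^2\vert Z\vert)=\tr(\vert Z\vert^3)\le1$, and $\tr(\vert Z\vert)\le\sqrt{1+(1+\rho^{-1})^2}\sqrt r$, giving $6\sqrt{(1+(1+\rho^{-1})^2)r}\max\{1,\tr(Z)^2\}$; the error term contributes at most $\frac32\sqrt{(1+(1+\rho^{-1})^2)r}$ (resp. $\frac32$, which is smaller since $r\ge1$). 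Hence $\E\vert\tr(aa^\ast Z)\vert^3\le\frac{15}{2}\sqrt{(1+(1+\rho^{-1})^2)r}\max\{1,\tr(Z)^2\}$.

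Substituting these two estimates into the Paley--Zygmund bound, the factor $\max\{1,\tr(Z)^2\}$ cancels and what remains is $(1-\theta^2)^3\frac{(1/2)^3}{(15/2)^2}\cdot\frac{1}{(1+(1+\rho^{-1})^2)r}=(1-\theta^2)^3\frac{1}{450(1+(1+\rho^{-1})^2)r}$, uniformly in $Z\in T_{\rho,r}$, which is exactly the claimed lower bound on $Q_\theta$. The only genuinely delicate point — and the step I would take most care over — is the bookkeeping of the approximation-error terms: one must verify that the prescribed accuracies are just strong enough to keep the second moment $\ge\frac12\max\{1,\tr(Z)^2\}$ and the third-moment error a small constant multiple of $\sqrt{(1+(1+\rho^{-1})^2)r}$. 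Everything else is the computation from Lemma~\ref{lem:Q_bound}, carried out twice so that both notions of design accuracy can be addressed by swapping the roles of the operator and nuclear norms in Hölder's inequality.
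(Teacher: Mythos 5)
Your proposal is correct and follows essentially the same route as the paper: the same Paley--Zygmund variant with $p=3/2$, the same decomposition of the second and third moments into the exact-design term plus an approximation error, the same two Hölder pairings (operator norm against nuclear norm, and vice versa) for the $\theta_\infty$ and $\theta_1$ cases, and the same constants $\tfrac12$, $\tfrac{15}{2}$ and hence $\tfrac{1}{450}$. Nothing is missing.
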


	\begin{proof}[Proof of Theorem \ref{th:general_recovery_from_approx_3_designs}]
		The proof follows word for word the proof of Theorem \ref{th:general_recovery_from_exact_3_designs} except that we use Lemma \ref{lem:gen_bound_for_Q} instead of Lemma \ref{lem:Q_bound} and \cite[Generalized version of Proposition 13]{kueng2017low} instead of \cite[Proposition 13]{kueng2017low}.
	\end{proof}
	
	\appendix
	\section{}
	
	\subsection{Payley-Zygmund inequality}
	
	We will present a proof for self-containment.
	
	\begin{lemma}\label{lem:payley_zygmund_variant}
		Let $p>1$, $0\leq \theta\leq 1$ and $W\geq 0$ a random variable with $\E W^p <\infty$. Then, it holds
		\begin{align*}
			\Prob\left(W> \theta \, \E W\right)\geq (1-\theta)^{\frac{p}{p-1}}\frac{(\E W)^{\frac{p}{p-1}}}{\left(\E W^p\right)^{\frac{1}{p-1}}}.
		\end{align*}
	\end{lemma}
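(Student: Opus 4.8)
The plan is to reproduce the classical one-line argument behind the Paley--Zygmund inequality, with Cauchy--Schwarz replaced by Hölder's inequality for the conjugate pair $(p,\,p/(p-1))$. First I would dispose of the degenerate case: if $\E W=0$ then $W=0$ almost surely, both sides of the asserted inequality vanish, and there is nothing to prove; so assume $\E W>0$. Since we work on a probability space, Hölder's inequality (or Jensen's) gives $\E W\le(\E W^p)^{1/p}<\infty$, so every quantity that appears below is finite and positive, and the right-hand side is well defined.

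The core is a two-step estimate. Write $A=\{W>\theta\,\E W\}$ and split the expectation according to $A$ and its complement:
\[
\E W=\E[W\mathbf{1}_{A^c}]+\E[W\mathbf{1}_{A}]\le \theta\,\E W+\E[W\mathbf{1}_{A}],
\]
where the bound on the first term uses that $W\le\theta\,\E W$ on $A^c$. Rearranging yields $\E[W\mathbf{1}_A]\ge(1-\theta)\,\E W$. Next, apply Hölder's inequality to the product $W\cdot\mathbf{1}_A$ with exponents $p$ and $q:=p/(p-1)$:
\[
\E[W\mathbf{1}_A]\le(\E W^p)^{1/p}\,(\E\mathbf{1}_A^{\,q})^{1/q}=(\E W^p)^{1/p}\,\Prob(A)^{(p-1)/p}.
\]
Combining the two displays gives $(1-\theta)\,\E W\le(\E W^p)^{1/p}\,\Prob(A)^{(p-1)/p}$, and raising both sides to the power $p/(p-1)$ produces exactly
\[
\Prob(A)\ge(1-\theta)^{\frac{p}{p-1}}\,\frac{(\E W)^{\frac{p}{p-1}}}{(\E W^p)^{\frac{1}{p-1}}},
\]
as claimed; note $\Prob(W>\theta\,\E W)\ge\Prob(A)$ trivially (they are equal), so the statement follows.

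I do not expect any genuine obstacle here — the argument is short and robust. The only points deserving a word of care are the degenerate case $\E W=0$, the verification that $\theta\,\E W$ is the correct threshold so that $W\le\theta\,\E W$ indeed holds on $A^c$, and the harmless use of a strict inequality "$>$" in the definition of $A$. For $p=2$ this recovers the usual Paley--Zygmund inequality; in the application of Lemma~\ref{lem:payley_zygmund_variant} within Lemma~\ref{lem:Q_bound} one takes $p=\tfrac{3}{2}$ and $W=|\tr(aa^{\ast}Z)|^2$, which is precisely what turns a third-moment bound into a lower bound for $Q_\theta$.
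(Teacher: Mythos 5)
Your proof is correct and follows essentially the same route as the paper's: split $\E W$ over the event $\{W>\theta\,\E W\}$ and its complement, bound the first piece by $\theta\,\E W$, apply H\"older with exponents $p$ and $p/(p-1)$ to the second, and rearrange. The only addition is your explicit treatment of the degenerate case $\E W=0$, which the paper omits but which does not change the argument.
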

	\begin{proof}
		We can write the expected value as
		\begin{align*}
			\E W=\E \left(W \mathbbm{1}_{\{W\leq \theta \E W\}}\right)+\E \left(W \mathbbm{1}_{\{W> \theta \E W\}}\right).
		\end{align*}
		The first summand is bounded by $\theta \E W$. For the second, we can use Hölder's inequality to obtain
		\begin{align*}
			\E \left(W \mathbbm{1}_{\{W> \theta \E W\}}\right)\leq \left(\E W^p\right)^{\frac{1}{p}} \left( \E {\,\mathbbm{1}_{\{W>\theta \E W\}}}^{\frac{p}{p-1}}\right)^{\frac{p-1}{p}}=\left(\E W^p\right)^{\frac{1}{p}} \Prob(W>\theta \E W)^{\frac{p-1}{p}}.
		\end{align*}
		Putting everything together yields
		\begin{align*}
			\E W\leq \theta \E W + \left(\E W^p\right)^{\frac{1}{p}} \Prob(W>\theta \E W)^{\frac{p-1}{p}}\quad\Leftrightarrow \quad \Prob\left(W> \theta \, \E W\right)\geq (1-\theta)^{\frac{p}{p-1}}\frac{(\E W)^{\frac{p}{p-1}}}{\left(\E W^p\right)^{\frac{1}{p-1}}}.
		\end{align*}
	\end{proof}
	
	\subsection{The trace of the symmetrizer map}
	
	Within our arguments we need the following calculation of the trace of the symmetrizer map $P_{\text{Sym}^3}$.
	
	\begin{lemma}\label{lem:trace_of_sym_map}
		Let $X,Y,Z\in\mathcal{H}_n$.  Then,
		\begin{align*}
			3!\tr\left(P_{\sym^3} X\otimes Y\otimes Z\right)=&\tr(X)\tr(Y)\tr(Z)+\tr(X)\tr(YZ)+\tr(Y)\tr(XZ)\\
			&+\tr(Z)\tr(XY)+\tr(XYZ)+\tr(XZY).
		\end{align*}
	\begin{proof}
		Let $X,Y,Z\in\mathcal{H}_n$. A basis of $(\C^n)^{\otimes 3}$ is given by $\{e_i\otimes e_j\otimes e_k\}_{1\leq i,j,k\leq n}$. For these basis elements we get
		\begin{align*}
			(X\otimes Y\otimes Z)(e_i\otimes e_j\otimes e_k)=Xe_i\otimes Ye_j\otimes Ze_k=\sum_{r,s,t=1}^n X_{ri}Y_{sj}Z_{tk}\cdot (e_r\otimes e_s\otimes e_t)
		\end{align*}
		and
		\begin{align*}
			&\left(P_{\text{Sym}^3} X\otimes Y\otimes Z\right)(e_i\otimes e_j\otimes e_k)\\
			&=\frac{1}{3!}(Xe_i\otimes Ye_j\otimes Ze_k+Xe_i\otimes Ze_k\otimes Ye_j+Ye_j\otimes Xe_i\otimes Ze_k+Ye_j\otimes Ze_k\otimes Xe_i\\
			&\,\,\,\,\,\,\,+Ze_k\otimes Xe_i\otimes Ye_j+Ze_k\otimes Ye_j\otimes Xe_i)\\
			&=\frac{1}{3!}\sum_{r,s,t=1}^n(X_{ri} Y_{sj} Z_{tk}+X_{ri} Z_{sk} Y_{tj}+Y_{rj}X_{si}Z_{tk}+Y_{rj}Z_{sk}X_{ti}+Z_{rk}X_{si}Y_{tj}+Z_{rk}Y_{sj}X_{ti})\\
			&\quad\quad\quad\quad\quad\cdot(e_r\otimes e_s\otimes e_t).
		\end{align*}
		The trace is invariant under basis transformations. This makes computing the trace easy since we know the range of each basis element under $P_{\sym^3} X\otimes Y\otimes Z$. Hence,
		\begin{align*}
			3!\tr\left(P_{\sym^3} X\otimes Y\otimes Z\right)&=\sum_{i,j,k=1}^n X_{ii}Y_{jj}Z_{kk}+X_{ii}Z_{jk}Y_{kj}+Y_{ij}X_{ji}Z_{kk} +Y_{ij}Z_{jk}X_{ki}+Z_{ik}X_{ji}Y_{kj}\\
			&\quad\quad\quad\quad+Z_{ik}Y_{ji}X_{ki}\\
			&=\tr(X)\tr(Y)\tr(Z)+\tr(X)\tr(YZ)+\tr(Y)\tr(XZ)+\tr(Z)\tr(XY)\\
			&\quad+\tr(XYZ)+\tr(XZY).
		\end{align*}
	\end{proof}
	\end{lemma}
	
	\section*{Acknowledgements}
	The author thanks H. Führ (RWTH University) for his helpful comments and advice. The author acknowledges funding by the Deutsche Forschungsgemeinschaft (DFG, German Research Foundation) - Project number 442047500 through the Collaborative Research Center "Sparsity and Singular Structures" (SFB 1481).
	\bibliography{References_3-designs.bib}
	\bibliographystyle{plain}
	\vspace{0.5cm}

\end{document}